\newcommand{\F}{\mathbb{F}}
\newcommand{\ket}[1]{| #1 \rangle}
\DeclareMathOperator{\Tr}{Tr}
\newcommand{\be}{\begin{equation}}
\newcommand{\ee}{\end{equation}}
\newcommand{\bea}{\begin{eqnarray}}
\newcommand{\eea}{\end{eqnarray}}
\newcommand{\bes}{\begin{equation*}}
\newcommand{\ees}{\end{equation*}}
\newcommand{\beas}{\begin{eqnarray*}}
\newcommand{\eeas}{\end{eqnarray*}}
\newtheorem{thm}{Theorem}
\newtheorem*{thm*}{Theorem}
\newtheorem{lem}[thm]{Lemma}
\newtheorem*{lem*}{Lemma}
\newtheorem{prop}[thm]{Proposition}
\theoremstyle{definition}
\newtheorem{dfn}{Definition}
\begin{document}

\title{The quantum query complexity of learning multilinear polynomials}

\author{Ashley Montanaro\footnote{Centre for Quantum Information and Foundations, Department of Applied Mathematics and Theoretical Physics, University of Cambridge, UK; {\tt am994@cam.ac.uk}.}}

\maketitle

\begin{abstract}
In this note we study the number of quantum queries required to identify an unknown multilinear polynomial of degree $d$ in $n$ variables over a finite field $\F_q$. Any bounded-error classical algorithm for this task requires $\Omega(n^d)$ queries to the polynomial. We give an exact quantum algorithm that uses $O(n^{d-1})$ queries for constant $d$, which is optimal. In the case $q=2$, this gives a quantum algorithm that uses $O(n^{d-1})$ queries to identify a codeword picked from the binary Reed-Muller code of order $d$.
\end{abstract}


\section{Introduction}

A central problem in computational learning theory is to determine the complexity of identifying an unknown function of a certain type, given access to that function via an oracle. We say that a class $\mathcal{F}$ of functions can be {\em learned} using $t$ queries if any function $f \in \mathcal{F}$ can be identified with $t$ uses of $f$ (perhaps allowing some probability of error). It is known that some classes of functions can be learned more efficiently by quantum algorithms than is possible classically. In particular, one of the earliest results in the field of quantum computation is that the class of linear functions $\F_2^n\rightarrow \F_2$ (also known as Hadamard codewords) can be learned using a single quantum query~\cite{bernstein97}, whereas $\Omega(n)$ queries are required classically. Here we generalise this result to quantum learning of {\em multilinear} functions over general finite fields.

Let $\F_q$ denote the finite field with $q=p^r$ elements for some prime $p$. Every function $f:\F_q^n \rightarrow \F_q$ can be represented as a polynomial in $n$ variables over $\F_q$. $f$ is said to be a degree $d$ polynomial if it can be written as a polynomial whose every term is of total degree at most $d$. For example, the function $f:\F_5^3 \rightarrow \F_5$ defined by $f(x) = 2 x_1 + 4 x_1^2 x_2 + x_1 x_2 x_3$ is a degree 3 polynomial. The set of polynomials of degree $d$ in $n$ variables over $\F_q$ is known as the (generalised) Reed-Muller code of order $d$ over $\F_q$.

We say that a degree $d$ polynomial $f:\F_q^n \rightarrow \F_q$ is multilinear if it can be written as
\[ f(x) = \sum_{S \subseteq [n],|S|\le d} \alpha_S \prod_{i \in S} x_i \]
for some coefficients $\alpha_S \in \F_q$, where $[n]$ denotes the set $\{1,\dots,n\}$. Note that in the case $S = \emptyset$ we define $\prod_{i \in S} x_i = 1$. For example, any multilinear polynomial of degree 3 can be written as
\[ f(x) = \alpha_{\emptyset} + \sum_{i} \alpha_{\{i\}} x_i + \sum_{i<j} \alpha_{\{i,j\}} x_i x_j + \sum_{i<j<k} \alpha_{\{i,j,k\}} x_i x_j x_k. \]
Technically, such functions are {\em multiaffine} rather than multilinear, as they are affine in each variable; however, we use the ``multilinear'' terminology for consistency with prior work. In particular, note that in this terminology, linear functions $f:\F_q^n \rightarrow \F_q$ (i.e.\ functions such that $f(x+y) = f(x) + f(y)$) are equivalent to degree 1 multilinear polynomials with no constant term. In the important special case $q=2$ (Boolean functions), every function $f:\F_2^n \rightarrow \F_2$ is multilinear.

Given the ability to query a multilinear degree $d$ polynomial $f$ on arbitrary $x \in \F_q^n$, we would like to determine ({\em learn}) $f$ using the smallest possible number of queries. A straightforward classical algorithm can solve this problem by querying $f(x)$ for all strings $x \in \F_q^n$ that contain only 0 and 1, and such that $|x| \le d$. (We write $|x|$ for the Hamming weight of $x\in \F_q^n$, i.e.\ the number of non-zero components.) To see this, first consider the special case where for some $k$, $\alpha_S = 0$ for all $S$ such that $|S| < k$. Then knowing $f(x)$ for all $x$ of the above form such that $|x|=k$ is sufficient to determine all of the degree $k$ coefficients of $f$ (note that this relies on $f$ being multilinear). More generally, let $f_k$ denote the degree $k$ part of $f$, i.e.
\[ f_k(x) = \sum_{S \subseteq [n],|S|=k} \alpha_S \prod_{i \in S} x_i.\]
For any $k$, once $f_{\ell}$ is known for all $\ell \le k$, the degree $k+1$ coefficients can be determined from the inputs of Hamming weight $k+1$: whenever $f$ is queried on $x$, subtract $\sum_{\ell=0}^k f_{\ell}(x)$ from the result to simulate that $\alpha_S=0$ for all $S$ such that $|S|\le k$. The algorithm can therefore learn $f$ with certainty using $1 + n + \binom{n}{2}+\binom{n}{3}+\dots+\binom{n}{d}$ queries, which is $O(n^d)$ for constant $d$. In the special case of functions $f:\F_2^n \rightarrow \F_2$, all polynomials are multilinear. This implies that the class of all degree $d$ polynomials $f:\F_2^n \rightarrow \F_2$ can be learned using $O(n^d)$ queries.


It is also easy to see that the above algorithm is exactly optimal in an information-theoretic sense. As the number of distinct multilinear degree $d$ polynomials of $n$ variables over $\F_q$ is equal to
\[ q^{1 + n + \binom{n}{2} + \binom{n}{3} + \dots + \binom{n}{d}}, \]
and as a classical query to $f$ only provides $\log_2 q$ bits of information, any classical algorithm must make $1 + n+\binom{n}{2}+\binom{n}{3}+\dots+\binom{n}{d} = \Omega(n^d)$ queries to $f$ in order to identify it with certainty. A similar bound can be proven for bounded-error algorithms. Indeed, let $f$ be picked uniformly at random, and consider an algorithm (without loss of generality deterministic) that makes at most $c$ queries to $f$ before it outputs an answer. Such an algorithm can output the correct answer for at most $q^c$ functions $f$, and hence succeeds with probability at most $q^{c-(1 + n + \binom{n}{2} + \binom{n}{3} + \dots + \binom{n}{d})}$.

Using similar techniques, one can find a lower bound for {\em quantum} query algorithms~\cite{hoyer05}. In the standard quantum query model, the algorithm accesses $f$ via the unitary operation $O_f \ket{x}\ket{y} = \ket{x}\ket{y + f(x)}$, where $x \in \F_q^n$, $y \in \F_q$. We formalise a lower bound on the number of queries required to identify $f$ in this model as the following proposition\footnote{In the case $q = 2$, this lower bound can also be obtained from independent results of Farhi et al~\cite{farhi99} and Servedio and Gortler~\cite{servedio01}.}.

\begin{prop}
Let $f:\F_q^n \rightarrow \F_q$ be a multilinear degree $d$ polynomial over $\F_q$. Then any quantum query algorithm which learns $f$ with bounded error must make $\Omega(n^{d-1})$ queries to $f$.
\end{prop}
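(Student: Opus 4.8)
The plan is to give a quantum analogue of the classical counting argument used above. Fix the class $\mathcal{C}$ of all multilinear degree $d$ polynomials over $\F_q$, so that $|\mathcal{C}| = q^M$ with $M = 1 + n + \binom{n}{2} + \dots + \binom{n}{d} = \Theta(n^d)$, and suppose (without loss of generality) that the learner is a $T$-query algorithm of the standard form $\ket{\psi_f} = U_T O_f U_{T-1} \cdots U_1 O_f U_0 \ket{0}$ whose final measurement identifies $f$ with success probability at least $2/3$. The key quantity I would track is the dimension of the subspace reachable by the family of final states $\{\ket{\psi_f} : f \in \mathcal{C}\}$.

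The central claim is that each query multiplies this reachable dimension by at most $q^{n+1}$, so that all the $\ket{\psi_f}$ lie in a common subspace of dimension $D \le q^{(n+1)T}$. To see the per-query factor, write the state entering a query as $\sum_{x \in \F_q^n} \ket{x} \ket{w_x}$, where $\ket{x}$ indexes the input register and $\ket{w_x}$ collects the answer and workspace registers; the oracle sends this to $\sum_x \ket{x}\,(S_{f(x)} \otimes I)\ket{w_x}$, where $S_c$ is the shift $\ket{y} \mapsto \ket{y+c}$ on the $q$-dimensional answer register. As $f$ ranges over $\mathcal{C}$, each sector $\ket{x}$ contributes vectors from the $q$-dimensional span $\{S_c \ket{w_x} : c \in \F_q\}$, so starting from a $D'$-dimensional space of pre-query states the post-query states span a space of dimension at most $q^n \cdot q \cdot D' = q^{n+1} D'$. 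Iterating over the $T$ oracle calls (and noting that the $U_t$ and the initial state contribute no $f$-dependence) gives $D \le q^{(n+1)T}$.

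I would then close the argument with the standard fact that $N$ equiprobable pure states lying in a $D$-dimensional space can be identified with average success probability at most $D/N$: restricting any POVM $\{E_f\}$ to that space, the success probability is $\frac{1}{N}\sum_f \bra{\psi_f} E_f \ket{\psi_f} \le \frac{1}{N}\sum_f \Tr E_f = \frac{D}{N}$. Taking $N = |\mathcal{C}| = q^M$ together with the bound on $D$, a success probability of at least $2/3$ forces $q^{(n+1)T} \ge \tfrac{2}{3} q^M$, hence $(n+1)T \ge M - O(1)$ and $T = \Omega(n^{d-1})$, using $M = \Theta(n^d)$ for constant $d$.

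The one step that carries all the content is the per-query dimension bound $D' \mapsto q^{n+1} D'$: this is the exact quantum counterpart of the classical statement that one query reveals $\log_2 q$ bits, now inflated to $(n+1)\log_2 q$ because a single quantum query acts coherently across all $q^n$ inputs. The main thing to get right is that this factor cannot be improved for general algorithms, and indeed it is essentially tight, since the Bernstein-Vazirani algorithm extracts $n$ field elements ($\Theta(n \log q)$ bits) from a single query to a linear function. A naive hybrid argument comparing each $f$ against the zero polynomial would be too lossy here, precisely because one quantum query can distinguish the zero function from exponentially many others at once; the dimension-counting argument is what correctly charges only $q^{n+1}$ per query while still accounting for all pairwise distinctions simultaneously.
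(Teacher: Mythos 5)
Your proof is correct, but it takes a genuinely different route from the paper. The paper treats each query as a round of two-way quantum communication, bounds the mutual information between the hidden polynomial and the algorithm's output by $2r(n+1)\log_2 q$ via Holevo's theorem, and then invokes Fano's inequality to lower-bound the error probability. You instead run a dimension-counting argument in the style of Farhi--Goldstone--Gutmann--Sipser (which the paper's footnote acknowledges as an alternative source of the bound for $q=2$; your version works for general $q$): the per-query factor $D' \mapsto q^{n+1}D'$ is justified correctly by decomposing the pre-query state over the input-register sectors and noting that each sector's image under all shifts $S_c$ spans at most $q$ dimensions, and the discrimination bound $\frac{1}{N}\sum_f \bra{\psi_f}E_f\ket{\psi_f} \le \Tr(P)/N = D/N$ is the standard trace argument. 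Both approaches yield $\Omega(n^{d-1})$; what yours buys is elementarity (pure linear algebra plus positivity of POVM elements, no Holevo or Fano) and in fact a slightly better constant, since the communication framing costs the paper a factor of $2$ from counting qubits sent in both directions, giving roughly $r \gtrsim M/(2(n+1))$ versus your $T \ge (M-1)/(n+1)$ where $M = 1+n+\binom{n}{2}+\dots+\binom{n}{d}$. What the paper's information-theoretic route buys in exchange is an explicit quantitative trade-off between the number of queries and the achievable success probability at every scale (via the Fano expression for $P_e$), whereas your argument as stated is tuned to a fixed success threshold; this is a presentational rather than substantive difference, since your inequality $q^{(n+1)T} \ge p_{\mathrm{succ}}\, q^M$ carries the same information.
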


\begin{proof}
Each query can be seen as a round of a communication process, where in each round the algorithm sends the registers $\ket{x}$ and $\ket{y}$ to the oracle, using $(n + 1) \log_2 q$ qubits of communication; the oracle then performs the map $\ket{x}\ket{y} \mapsto \ket{x}\ket{y + f(x)}$ and returns the registers to the algorithm. Let $f$ be picked uniformly at random from the set of degree $d$ multilinear polynomials, let $X$ be the corresponding random variable, and let $Y$ be the random variable corresponding to the function which is output by the algorithm. By Holevo's theorem \cite{holevo73} (see also~\cite{cleve98}), after $r$ rounds of communication, the mutual information between $X$ and $Y$ satisfies the upper bound
\[ I(X:Y) \le 2r (n+1) \log_2 q. \]
On the other hand, Fano's inequality~\cite{cover06} states that the probability $P_e$ of identifying $f$ incorrectly satisfies the lower bound
\[ P_e \ge 1 - \frac{I(X:Y)+1}{\log_2\left(q^{1 + n+\binom{n}{2}+\binom{n}{3}+\dots+\binom{n}{d}}\right)}, \]
which thus implies that 
\[ P_e \ge 1 - \frac{2r (n+1) + 1/\log_2 q}{1 + n+\binom{n}{2}+\binom{n}{3}+\dots+\binom{n}{d}}. \]
For this quantity to be upper bounded by a constant, we must have $r = \Omega(n^{d-1})$.
\end{proof}

The main result of this note is that this asymptotic scaling can actually be achieved.

\begin{thm}
\label{thm:main}
Let $f:\F_q^n \rightarrow \F_q$ be a multilinear degree $d$ polynomial over $\F_q$. Then there is an exact quantum algorithm which learns $f$ with certainty using $1 + \sum_{i=1}^d 2^{i-1} \binom{n}{i-1}$ queries to $f$, which is $O(n^{d-1})$ for constant $d$.
\end{thm}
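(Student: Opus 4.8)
The plan is to reduce the whole problem to repeated applications of a single primitive: an exact generalized Bernstein--Vazirani subroutine that learns an affine function $F:\F_q^m \rightarrow \F_q$, $F(y) = c + \sum_j a_j y_j$, with a single query, returning all of the linear coefficients $a_j$ (the constant $c$ only contributes a global phase). Over $\F_q = \F_{p^r}$ this is implemented by replacing the Hadamard transform of ordinary Bernstein--Vazirani with the quantum Fourier transform over the additive group $(\Z/p)^r$, using phase kickback through a fixed nontrivial additive character $\chi(x) = \omega_p^{\Tr(x)}$ and the nondegeneracy of the trace form; the inverse transform on the input register then yields $(a_j)_j$ deterministically. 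First I would establish this primitive and its exactness.

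Next, I would learn $f$ one degree at a time, from the top degree $d$ downward. Suppose the parts $f_{k+1}, \dots, f_d$ have already been identified. Since these are now known, I can classically subtract them and so obtain exact, reversible oracle access to $g = f - \sum_{\ell > k} f_\ell$, a polynomial of degree at most $k$, at the cost of one query to $f$ per evaluation. For each $T \subseteq [n]$ with $|T| = k-1$, I form a finite-difference combination over the coordinates of $T$: fixing $x_i = 1$ for $i \in U$ and $x_i = 0$ for $i \in T \setminus U$, and leaving $x_i = y_i$ variable for $i \notin T$, I define
\[ F_T(y) = \sum_{U \subseteq T} (-1)^{|T|-|U|}\, g(x^{U,y}). \]
A short inclusion--exclusion computation (using $\sum_{U \subseteq T}(-1)^{|T|-|U|}[S \cap T \subseteq U] = [T \subseteq S]$, valid over any commutative ring and in particular over $\F_q$) shows $F_T(y) = \sum_{S \supseteq T} \alpha_S \prod_{i \in S \setminus T} y_i$. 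Because $g$ has degree at most $k$ while $S \supseteq T$ forces $|S| \ge k-1$, only the cases $|S| \in \{k-1,k\}$ survive, leaving the affine function $F_T(y) = \alpha_T + \sum_{j \notin T} \alpha_{T \cup \{j\}}\, y_j$. Applying the primitive to $F_T$ then recovers the entire ``row'' $\{\alpha_{T \cup \{j\}} : j \notin T\}$ of degree-$k$ coefficients at once, and ranging over all $\binom{n}{k-1}$ sets $T$ recovers every $\alpha_S$ with $|S| = k$ (redundantly, since each such $S$ contains exactly $k$ eligible sets $T$).

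The key implementation point is that one invocation of the primitive costs exactly $2^{|T|} = 2^{k-1}$ queries to $f$. I would build the oracle $\ket{y}\ket{z} \mapsto \ket{y}\ket{z + F_T(y)}$ by looping over the $2^{k-1}$ subsets $U \subseteq T$: for each $U$, set the $T$-coordinates of the input register to the indicator of $U$ (a query-free reversible operation), add the signed, higher-degree-corrected value $(-1)^{|T|-|U|} g(x^{U,y})$ directly into the accumulator $z$ using a single call to $O_f$ conjugated by the query-free scalar map $z \mapsto (-1)^{|T|-|U|} z$ and the known classical correction for $g$, then proceed to the next $U$. Because $O_f$ writes straight into the shared accumulator, no separate uncomputation queries are needed, and the $T$-coordinates are reset to $0$ at the end for free. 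Summing $2^{k-1}\binom{n}{k-1}$ over $k = 1, \dots, d$ and adding one final query at $x = 0$ for $\alpha_\emptyset$ gives the stated count $1 + \sum_{i=1}^d 2^{i-1}\binom{n}{i-1} = O(n^{d-1})$. I expect the main obstacle to be the careful, fully reversible and exact implementation of this finite-difference oracle within the claimed budget --- in particular arranging the accumulation so that the $2^{k-1}$ evaluations of $g$ require no extra queries to clean up --- together with verifying exactness of the Fourier-sampling primitive over $\F_{p^r}$.
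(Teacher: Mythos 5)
Your proposal is correct and follows essentially the same route as the paper: the generalized Bernstein--Vazirani primitive over $\F_q$ via the trace character is the paper's Lemma \ref{lem:linear}, your finite-difference function $F_T$ is (up to fixing the $T$-coordinates rather than shifting by basis vectors, which is immaterial since the result does not depend on them) the paper's discrete-derivative function $f_S$ of Lemma \ref{lem:fs}, and the top-down peeling with classical subtraction of the learned higher-degree parts matches the paper's inductive step and yields the identical query count.
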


The case $d=1$, $q=2$ of this result was previously proven by Bernstein and Vazirani~\cite{bernstein97}, while a bounded-error quantum algorithm using $O(n)$ queries for the case $d=2$, $q=2$ was more recently given by R\"otteler~\cite{roetteler09}; by contrast, the algorithm given here is exact and works for all $d$ and all fields $\F_q$. In related work, a quantum algorithm for estimating quadratic forms over the reals using $O(n)$ queries had previously been given by Jordan \cite[Appendix D]{jordan08}.


\section{Proof of Theorem \ref{thm:main}}

The only quantum ingredient we will need to prove Theorem \ref{thm:main} is the following lemma, which is implicit in \cite{beaudrap02,vandam06} and is a simple extension of the Bernstein-Vazirani algorithm \cite{bernstein97} for identifying linear functions over $\F_2$.

\newcounter{linlem}\setcounter{linlem}{\value{thm}}

\begin{lem}[\cite{beaudrap02,vandam06}]
\label{lem:linear}
Let $f:\F_q^n \rightarrow \F_q$ be linear, and let $g:\F_q^n \rightarrow \F_q$ be the function $g(x) = f(x) + \beta$ for some constant $\beta \in \F_q$. Then $f$ can be determined exactly using one quantum query to $g$.
\end{lem}

For completeness, we give a full proof of Lemma \ref{lem:linear} in Appendix \ref{sec:linproof}.

We will derive a quantum algorithm to learn an unknown multilinear degree $d$ polynomial $f$ by introducing a {\em linear} function $f_S$ of $n$ variables which can be produced using a relatively small number of queries to $f$, and from which $f$ can be determined using Lemma \ref{lem:linear}. This technique is somewhat similar to the approach used to learn quadratic polynomials with bounded error in the work~\cite{roetteler09}. A related function was previously used by Kaufman and Ron \cite{kaufman06} to produce an efficient classical {\em tester} for low-degree polynomials over finite fields.

For any $k$-subset $S \subseteq [n]$, let $S_j$ denote the $j$'th element of $S$, where $S$ is considered as an increasing sequence of integers. For $i \in [n]$, let $e_i$ denote the $i$'th element in the standard basis for the vector space $\F_q^n$. For any $f:\F_q^n \rightarrow \F_q$ and any subset $S \subseteq [n]$, define the function $f_S:\F_q^n \rightarrow \F_q$ as follows:
\[ f_S(x) = \sum_{\beta_1,\dots,\beta_k \in \{0,1\}} (-1)^{k - \sum_{i=1}^{k} \beta_i}\,f\left(x + \sum_{j=1}^k \beta_j e_{S_j}\right), \]
where the inner sum is over $\F_q^n$ and the outer sum is over $\F_q$. For example, for $S = \{1,2\}$, $f_S(x) = f(x) - f(x + e_1) - f(x + e_2) + f(x + e_1 + e_2)$. When $q=2$, $f_S(x)$ sums $f$ over the affine subspace of $\F_2^n$ positioned at $x$ and spanned by $\{e_i:i\in S\}$. It is clear that a query to $f_S$ can be simulated using $2^k$ queries to $f$. One way of understanding $f_S$ is in terms of {\em discrete derivative} operators. If we define the discrete derivative of $f$ in direction $i \in [n]$ as $(\Delta_i f)(x) = f(x + e_i) - f(x)$, then $f_S(x) = (\Delta_{S_1} \Delta_{S_2}  \dots \Delta_{S_k} f)(x)$. In other words, $f_S$ is the function obtained by taking the derivative of $f$ with respect to all of the variables in $S$.

We will be interested in querying $f_S$ for sets $S$ of size $d-1$. In this case, we have the following characterisation for multilinear polynomials $f$.

\newcounter{fslem}\setcounter{fslem}{\value{thm}}

\begin{lem}
\label{lem:fs}
Let $f:\F_q^n \rightarrow \F_q$ be a multilinear polynomial of degree $d$ with expansion
\[ f(x) = \sum_{T \subseteq [n],|T| \le d} \alpha_T \prod_{i \in T} x_i. \]
Then, for any $S$ such that $|S|=d-1$,
\[ f_S(x) = \alpha_S + \sum_{k \notin S} \alpha_{S \cup \{k\}} x_k. \]
\end{lem}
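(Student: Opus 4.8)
The plan is to work directly with the discrete-derivative description of $f_S$ recorded just above the statement, namely $f_S = \Delta_{S_1}\Delta_{S_2}\cdots\Delta_{S_{d-1}} f$, and to exploit the fact that each operator $\Delta_j$ is linear. This reduces the entire computation to understanding how the iterated derivative acts on a single multilinear monomial $\prod_{i \in T} x_i$, after which the result follows by summing over $T$ and reading off which terms survive.

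First I would compute the action of a single discrete derivative $\Delta_j$ on a monomial $\prod_{i\in T} x_i$. Evaluating at $x+e_j$ replaces $x_j$ by $x_j+1$ and leaves all other coordinates unchanged. If $j\notin T$ the variable $x_j$ does not occur, so the monomial is unaffected by the shift and $\Delta_j\prod_{i\in T}x_i = 0$. If $j\in T$, factoring out $x_j$ gives $\Delta_j\prod_{i\in T}x_i = ((x_j+1)-x_j)\prod_{i\in T\setminus\{j\}}x_i = \prod_{i\in T\setminus\{j\}}x_i$. The crucial point — and the only place where multilinearity is genuinely used — is that since each variable occurs to the first power, the output is again a multilinear monomial with no surviving power of $x_j$; this is what keeps the formula clean over a field of arbitrary characteristic.

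Next I would observe that the operators $\Delta_j$ commute (each simply deletes its own variable, or annihilates the monomial if that variable is absent), so iterating over all of $S$ gives $(\Delta_{S_1}\cdots\Delta_{S_{d-1}} \prod_{i\in T} x_i)(x) = \prod_{i\in T\setminus S} x_i$ when $S \subseteq T$ and $0$ otherwise. Applying this to the full expansion $f = \sum_T \alpha_T \prod_{i\in T} x_i$ term by term yields $f_S(x) = \sum_{T \supseteq S,\,|T|\le d} \alpha_T \prod_{i\in T\setminus S} x_i$. Now I impose $|S| = d-1$: every surviving $T$ satisfies both $S \subseteq T$ and $|T|\le d$, so the only options are $|T| = d-1$, which forces $T = S$ and contributes the constant $\alpha_S$, and $|T| = d$, which forces $T = S\cup\{k\}$ for a single index $k\notin S$ and contributes $\alpha_{S\cup\{k\}} x_k$. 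Summing these gives exactly $f_S(x) = \alpha_S + \sum_{k\notin S}\alpha_{S\cup\{k\}} x_k$.

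I do not expect a genuine obstacle here; the computation is routine once the single-derivative identity is in hand. The one thing demanding care is the bookkeeping around multilinearity — confirming that no repeated-variable terms are ever generated, and that the degree constraint $|T|\le d$ together with $|S| = d-1$ leaves precisely the two cases above. If I wished to avoid invoking the stated equivalence between the alternating-sum definition of $f_S$ and the iterated-derivative form, I could instead establish the single-derivative identity for $\Delta_j$ and then induct on $|S|$, but using the derivative description directly is the cleaner route.
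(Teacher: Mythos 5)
Your proof is correct, but it takes a different route from the one the paper actually writes out. The paper's proof (Appendix B) works directly with the alternating-sum definition of $f_S$: it expands $f$ into monomials inside the sum over $\beta \in \{0,1\}^{d-1}$, kills the terms with $S \nsubseteq T$ by isolating a $\beta_j$ that does not appear in the product, and then evaluates the surviving terms by factoring the sum as $\prod_{i=1}^{d-1}\bigl(\sum_{\beta_i\in\{0,1\}}(-1)^{\beta_i}(x_{S_i}+\beta_i)\bigr) = (-1)^{d-1}$, which cancels the outer sign. You instead use the iterated discrete-derivative description $f_S = \Delta_{S_1}\cdots\Delta_{S_{d-1}}f$, reduce to the action of a single $\Delta_j$ on a multilinear monomial, and invoke linearity and commutativity. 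This is exactly the route the paper alludes to when it says the lemma ``follows easily from expressing $f_S$ in terms of discrete derivatives'' but relegates to a remark; your version is more modular and in fact yields the more general identity $f_S(x) = \sum_{T \supseteq S}\alpha_T\prod_{i\in T\setminus S}x_i$ for any $S$, from which the case $|S|=d-1$ is immediate. What the paper's direct computation buys in exchange is self-containedness: it never needs the equivalence between the alternating-sum definition and the iterated derivative, whereas your argument rests on that identity, which the paper states without proof. You correctly flag this and note it can be discharged by a one-line induction on $|S|$, so there is no gap, but a fully self-contained write-up should include that induction (or the equivalent binomial-expansion observation).
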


Lemma \ref{lem:fs} follows easily from expressing $f_S$ in terms of discrete derivatives; we also give a simple direct proof in Appendix \ref{sec:fsproof}. We are now ready to describe a quantum algorithm which uses $f_S$ to learn the degree $d$ component of $f$.

\begin{algorithm}[H]
\label{alg:learntop}
\ForEach{$S \subseteq [n]$ such that $|S|=d-1$}{
Use one query to $f_S$ to learn the coefficients $\alpha_{S \cup \{k\}}$, for all $k \notin S$\;
}
Output the function $f_d$ defined by $f_d(x) = \sum_{S \subseteq [n],|S|=d} \alpha_S \prod_{i \in S} x_i$\;
\caption{Learning the degree $d$ component of $f$}
\end{algorithm}

Correctness of this algorithm follows from Lemmas \ref{lem:linear} and \ref{lem:fs}. By Lemma \ref{lem:fs}, for any $S$ such that $|S|=d-1$, knowledge of the degree 1 component of $f_S$ is sufficient to determine $\alpha_{S \cup \{k\}}$ for all $k \notin S$. Therefore, knowing the degree 1 part of $f_S$ for all $S \subseteq [n]$ such that $|S|=d-1$ is sufficient to completely determine all degree $d$ coefficients of $f$. By Lemma \ref{lem:linear}, for any $S$ with $|S|=d-1$, the degree 1 component of $f_S$ can be determined with one quantum query to $f_S$. This implies that Algorithm \ref{alg:learntop} completely determines the degree $d$ component of $f$ using $\binom{n}{d-1}$ queries to $f_S$, each of which uses $2^{d-1}$ queries to $f$.

Once the degree $d$ component of $f$ has been learned, $f$ can be reduced to a degree $d-1$ polynomial by crossing out the degree $d$ part whenever the oracle for $f$ is called. That is, whenever the oracle is called on $x$, we subtract $f_d(x)$ from the result (recall $f_d$ is the degree $d$ part of $f$), at no extra query cost. Inductively, $f$ can be determined completely using
\[ 2^{d-1} \binom{n}{d-1} + 2^{d-2} \binom{n}{d-2} + \dots + 2n + 1 + 1 \]
queries; the last query is to determine the constant term $\alpha_{\emptyset}$, which can be achieved by classically querying $f(0^n)$. The number of queries used is therefore $O(n^{d-1})$ for constant $d$, completing the proof of Theorem \ref{thm:main}.


\section*{Acknowledgements}

I would like to thank Salman Beigi for spotting a crucial error in a previous version, and Graeme Mitchison and Tony Short for helpful comments. I would also like to thank two anonymous referees for their suggestions. This work was supported by an EPSRC Postdoctoral Research Fellowship.


\appendix


\section{Quantum learning of linear functions}
\label{sec:linproof}

In order to prove Lemma \ref{lem:linear}, we will use the quantum Fourier transform (QFT) over general finite fields. This was originally defined by de Beaudrap, Cleve and Watrous \cite{beaudrap02} and independently by van Dam, Hallgren and Ip \cite{vandam06}. The QFT over $\F_q$ is defined as the unitary operation
\[ Q_q \ket{x} = \frac{1}{\sqrt{q}} \sum_{y \in \F_q} \omega^{\Tr(xy)} \ket{y}, \]
where $\omega = e^{2 \pi i/p}$ (recall $q=p^r$) and the trace function $\Tr:\F_q \rightarrow \F_p$ is defined by $\Tr(x) := x + x^p + x^{p^2} + \dots + x^{p^{r-1}}$. If $q$ is prime (i.e.\ $r=1$), then of course $\Tr(x) = x$. The trace is linear: $\Tr(x + y) = \Tr(x) + \Tr(y)$ (see \cite{lidl97} for the proof of this and other standard facts about finite fields). This allows the $n$-fold tensor product of QFTs to be written concisely as
\[ Q_q^{\otimes n} \ket{x} = \frac{1}{q^{n/2}} \sum_{y \in \F_q^n} \omega^{\Tr (x \cdot y)} \ket{y}, \]
where $x \cdot y = \sum_{i=1}^n x_i y_i$, the sum being taken over $\F_q$.

For any function $f:\F_q \rightarrow \F_q$, let $U_f$ be the unitary operator that maps $\ket{x} \mapsto \omega^{\Tr(f(x))} \ket{x}$. Given access to $f$, $U_f$ can be implemented using a standard phase kickback trick as follows.

\begin{lem}[\cite{beaudrap02,vandam06}]
\label{lem:kickback}
$U_f$ can be implemented using one query to $f$.
\end{lem}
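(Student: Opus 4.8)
The plan is to use the standard phase-kickback trick, adapted to $\F_q$ via the trace function. First I would adjoin an ancilla register and prepare it in the state
\[ \ket{\psi} = Q_q \ket{-1} = \frac{1}{\sqrt{q}} \sum_{y \in \F_q} \omega^{-\Tr(y)} \ket{y}, \]
which requires no queries to $f$, since it is obtained by applying the QFT $Q_q$ to the fixed basis state $\ket{-1}$. The key observation is that $\ket{\psi}$ is an eigenvector of the ``add a constant $c$'' operation $\ket{y} \mapsto \ket{y+c}$ on $\F_q$: reindexing the sum by $y' = y + c$, which is a bijection of $\F_q$, and using the linearity of the trace, one finds that this operation multiplies $\ket{\psi}$ by the scalar $\omega^{\Tr(c)}$.

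Then I would apply the oracle $O_f$ once to the state $\ket{x} \otimes \ket{\psi}$. With $x$ held fixed in the first register, $O_f$ acts on the ancilla precisely as the ``add $f(x)$'' operation, so the eigenvector property gives
\[ O_f\, \ket{x} \otimes \ket{\psi} = \omega^{\Tr(f(x))}\, \ket{x} \otimes \ket{\psi}. \]
The ancilla is returned unchanged and can be discarded, leaving exactly the map $\ket{x} \mapsto \omega^{\Tr(f(x))} \ket{x} = U_f \ket{x}$, at the cost of a single query to $f$.

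The only step requiring care is the eigenvalue computation, whose crucial ingredient is the linearity of the trace, $\Tr(y - f(x)) = \Tr(y) - \Tr(f(x))$; this is what lets the dependence on $f(x)$ factor out of the sum as a global phase rather than remaining entangled with the summation index. I expect this to be the main (though still routine) obstacle: one must verify that the index shift is a bijection of $\F_q$ and apply the trace identity, after which the result is immediate. No property of $f$ beyond its being a well-defined function $\F_q \rightarrow \F_q$ is used, so the argument applies verbatim for arbitrary $f$.
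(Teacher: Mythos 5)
Your proof is correct and matches the paper's argument: both prepare the ancilla in the state $\frac{1}{\sqrt{q}}\sum_{y}\omega^{-\Tr(y)}\ket{y}$ (your $Q_q\ket{-1}$ equals the paper's $Q_q^{-1}\ket{1}$) and extract the phase $\omega^{\Tr(f(x))}$ via a single application of $O_f$, using linearity of the trace. Phrasing this as an eigenvector property of the ``add $c$'' shift rather than as a direct reindexing of the sum is only a cosmetic difference.
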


\begin{proof}
To implement $U_f$, append an ancilla register $\ket{y}$, $y \in \F_q$, in the initial state $\ket{1}$. Apply $Q_q^{-1}$ to this register to produce
\[ \frac{1}{\sqrt{q}} \sum_{y \in \F_q} \omega^{-\Tr(y)} \ket{y}, \]
then apply $O_f$ to both registers (recall $O_f \ket{x}\ket{y} = \ket{x}\ket{y + f(x)}$). For any $x \in \F_q$, the initial state $\ket{x}\ket{1}$ is mapped to
\[
\frac{1}{\sqrt{q}} \ket{x} \sum_{y \in \F_q} \omega^{-\Tr(y)} \ket{y + f(x)} = \frac{1}{\sqrt{q}} \ket{x} \sum_{y \in \F_q} \omega^{-\Tr (y-f(x))} \ket{y}
= \omega^{\Tr f(x)} \ket{x} \frac{1}{\sqrt{q}} \sum_{y \in \F_q} \omega^{-\Tr(y)} \ket{y},
\]
where we use the linearity of the trace function. As the second register is left unchanged by $O_f$, it can be ignored.
\end{proof}

We are now ready to prove Lemma \ref{lem:linear}.

\setcounter{thm}{\value{linlem}}

\begin{lem}[\cite{beaudrap02,vandam06}]
Let $f:\F_q^n \rightarrow \F_q$ be linear, and let $g:\F_q^n \rightarrow \F_q$ be the function $g(x) = f(x) + \beta$ for some constant $\beta \in \F_q$. Then $f$ can be determined exactly using one quantum query to $g$.
\end{lem}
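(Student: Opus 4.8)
The plan is to reduce the learning of a linear function $f$ to a single application of the phase-kickback primitive from Lemma~\ref{lem:kickback}, absorbing the unknown constant shift $\beta$ into a harmless global phase. Since $f$ is linear, there is a unique vector $a \in \F_q^n$ with $f(x) = a \cdot x$, and our goal is to recover $a$; the constant $\beta$ in $g(x) = f(x) + \beta$ is incidental and should drop out.

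First I would use the oracle for $g$ to implement the phase operator $U_g$ via Lemma~\ref{lem:kickback}, at a cost of one query. Starting from the uniform superposition $q^{-n/2} \sum_{x \in \F_q^n} \ket{x}$ (obtained by applying $Q_q^{\otimes n}$ to $\ket{0^n}$), applying $U_g$ yields
\[
\frac{1}{q^{n/2}} \sum_{x \in \F_q^n} \omega^{\Tr(g(x))} \ket{x} = \frac{\omega^{\Tr(\beta)}}{q^{n/2}} \sum_{x \in \F_q^n} \omega^{\Tr(a \cdot x)} \ket{x},
\]
where I have used linearity of the trace to factor $\omega^{\Tr(a\cdot x + \beta)} = \omega^{\Tr(\beta)} \omega^{\Tr(a \cdot x)}$. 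The constant $\beta$ has now become the global phase $\omega^{\Tr(\beta)}$, which is physically unobservable.

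Second, I would apply the inverse transform $(Q_q^{-1})^{\otimes n}$ (equivalently $Q_q^{\otimes n}$ followed by relabelling, or the conjugate transform) to this state. The key computation is that $Q_q^{\otimes n}$ maps the state $q^{-n/2}\sum_x \omega^{\Tr(a\cdot x)}\ket{x}$ exactly to the basis state $\ket{a}$, up to sign conventions in the transform; this is the standard orthogonality relation $\sum_{x} \omega^{\Tr((a - y)\cdot x)} = q^n\,[y = a]$ over $\F_q^n$. Measuring the resulting register in the computational basis then returns $a$ with certainty, which determines $f$ exactly. The whole procedure uses a single query to $g$, establishing the lemma.

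The one step requiring care is the orthogonality relation over $\F_q$: one must verify that $\sum_{x \in \F_q} \omega^{\Tr(c x)}$ equals $q$ when $c = 0$ and vanishes when $c \neq 0$. This is the main (mild) obstacle, and it follows because the additive character $x \mapsto \omega^{\Tr(cx)}$ is nontrivial precisely when $c \neq 0$, so its sum over the additive group $\F_q$ is zero by the standard character-sum identity; tensoring over the $n$ coordinates gives the $\F_q^n$ version. With this in hand the measurement outcome is deterministic, so the algorithm is exact rather than merely bounded-error.
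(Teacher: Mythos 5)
Your proposal is correct and follows essentially the same route as the paper's own proof: phase kickback via Lemma \ref{lem:kickback} to prepare $q^{-n/2}\sum_x \omega^{\Tr(a\cdot x+\beta)}\ket{x}$, absorption of $\beta$ into a global phase, then the inverse QFT and a computational-basis measurement, with correctness resting on the character orthogonality $\sum_x \omega^{\Tr((a-y)\cdot x)} = q^n\,[y=a]$. The only difference is that you spell out the character-sum identity explicitly, which the paper simply asserts.
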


\begin{proof}
First observe that $f$ will be linear if and only if $f(x) = a \cdot x = \sum_{i=1}^n a_i x_i$ for some $a \in \F_q^n$. Create the state
\[ \ket{\psi_g} := \frac{1}{q^{n/2}} \sum_{x \in \F_q^n} \omega^{\Tr (a \cdot x + \beta)} \ket{x} \]
via the technique of Lemma \ref{lem:kickback}, using one query to $g$. Now apply the $n$-fold tensor product of the inverse quantum Fourier transform to produce
\[ (Q_q^{-1})^{\otimes n} \ket{\psi_g} = \frac{1}{q^n} \sum_{x \in \F_q^n} \omega^{\Tr (a \cdot x + \beta)} \sum_{y \in \F_q^n} \omega^{-\Tr(x \cdot y)} \ket{y}
= \frac{1}{q^n} \omega^{\Tr(\beta)} \sum_{y \in \F_q^n} \left( \sum_{x \in \F_q^n} \omega^{\Tr ((a-y) \cdot x)} \right) \ket{y}.
\]
Note that $\beta$ has been relegated to an unobservable global phase, and the sum over $x$ will be zero unless $y=a$, in which case it will equal $q^n$. A measurement in the computational basis therefore yields $a$ with certainty, which suffices to determine $f$.
\end{proof}


\section{Proof of Lemma \ref{lem:fs}}
\label{sec:fsproof}

We finally prove Lemma \ref{lem:fs}, which we restate for convenience.

\setcounter{thm}{\value{fslem}}

\begin{lem}
Let $f:\F_q^n \rightarrow \F_q$ be a multilinear polynomial of degree $d$ with expansion
\[ f(x) = \sum_{T \subseteq [n],|T| \le d} \alpha_T \prod_{i \in T} x_i. \]
Then, for any $S$ such that $|S|=d-1$,
\[ f_S(x) = \alpha_S + \sum_{k \notin S} \alpha_{S \cup \{k\}} x_k. \]
\end{lem}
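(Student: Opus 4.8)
The plan is to exploit the linearity of the map $f \mapsto f_S$ together with the discrete-derivative identity $f_S = \Delta_{S_1}\Delta_{S_2}\cdots\Delta_{S_{d-1}}f$ recorded just before the statement. Since $f_S$ is by definition a fixed $\F_q$-linear combination of translates of $f$, the assignment $f \mapsto f_S$ is linear; hence it suffices to evaluate $(m_T)_S$ on each multilinear monomial $m_T(x) = \prod_{i \in T} x_i$ occurring in the expansion of $f$ and then reassemble the answer with the coefficients $\alpha_T$.

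First I would compute the action of a single discrete derivative on a monomial. For $j \in [n]$: if $j \notin T$ then $m_T$ does not depend on $x_j$, so $\Delta_j m_T = 0$; if $j \in T$, then $m_T(x+e_j) = (x_j+1)\prod_{i\in T\setminus\{j\}}x_i$, whence $\Delta_j m_T = m_{T\setminus\{j\}}$. The operators $\Delta_j$ are linear and commute (translations commute), so applying $\Delta_{S_1}\cdots\Delta_{S_{d-1}}$ to $m_T$ peels off the variables of $S$ one at a time: the result is $0$ unless every element of $S$ lies in $T$, and equals $m_{T\setminus S}$ when $S \subseteq T$. Summing over $T$ weighted by $\alpha_T$ then gives
\[ f_S(x) = \sum_{T:\,S\subseteq T} \alpha_T \prod_{i\in T\setminus S} x_i. \]

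Finally I would impose the degree constraints. Because $|S| = d-1$ while $|T|\le d$, the condition $S\subseteq T$ forces either $T=S$ (contributing $\alpha_S$, since $T\setminus S=\emptyset$ gives an empty product equal to $1$) or $T = S\cup\{k\}$ for a single $k\notin S$ (contributing $\alpha_{S\cup\{k\}}x_k$). Collecting these yields exactly $\alpha_S + \sum_{k\notin S}\alpha_{S\cup\{k\}}x_k$, as claimed. The only point needing care is the bookkeeping in the derivative-on-monomial step and the commutation/telescoping of the $\Delta_j$; these identities use only coefficients $\pm 1$ and the trivial per-variable cancellation $(x_j+1)-x_j=1$, so they hold over any $\F_q$ regardless of characteristic (in particular the argument is unaffected when $q$ is even and $-1=1$). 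As an alternative to the derivative route, one could expand the definition directly as $(m_T)_S(x)=\sum_{R\subseteq S}(-1)^{|S|-|R|}m_T(x+\sum_{i\in R}e_i)$, factor the product over $i\in T$, and invoke $\sum_{R'\subseteq S\setminus T}(-1)^{|R'|}=0$ whenever $S\not\subseteq T$ to obtain the same vanishing; but the discrete-derivative computation is cleaner.
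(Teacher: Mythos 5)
Your proof is correct, but it takes a different route from the one the paper actually writes out. The paper's Appendix B proof expands the defining sum for $f_S$ directly: it substitutes the monomial expansion of $f$, shows the inner sum over $\beta$ vanishes whenever $S \nsubseteq T$ (because some $\beta_j$ is absent from the product), and then explicitly factors the surviving terms into $\prod_{i=1}^{d-1}\bigl(\sum_{\beta_i}(-1)^{\beta_i}(x_{S_i}+\beta_i)\bigr) = (-1)^{d-1}$, so the two sign factors cancel. You instead use the identity $f_S = \Delta_{S_1}\cdots\Delta_{S_{d-1}}f$, which the paper states in the main text (and explicitly says the lemma ``follows easily from'') but does not elaborate on; you reduce everything to the single computation $\Delta_j m_T = m_{T\setminus\{j\}}$ if $j\in T$ and $0$ otherwise, then use linearity and commutativity of the $\Delta_j$. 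Your version is more modular and in fact proves the stronger general statement $f_S(x) = \sum_{T\supseteq S}\alpha_T\prod_{i\in T\setminus S}x_i$ for sets $S$ of any size, from which the $|S|=d-1$ case is immediate; it also makes transparent why no characteristic issues arise, since the only cancellation used is $(x_j+1)-x_j=1$. The paper's direct expansion buys self-containedness: it never needs the composition identity $f_S=\Delta_{S_1}\cdots\Delta_{S_{d-1}}f$, which your argument takes as given. That identity is asserted but not proved in the paper; it follows by a one-line induction on $|S|$ from the definition of $f_S$ (the $\beta_k=0$ and $\beta_k=1$ halves of the sum reassemble into $\Delta_{S_k}$ applied to $f_{S\setminus\{S_k\}}$), so this is not a gap, but a fully self-contained write-up should include that sentence.
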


\begin{proof}
For brevity, write $|\beta| = \sum_{i=1}^{d-1} \beta_i$. Let $\delta_{xy}$ be the Dirac delta function ($\delta_{xy}=1$ if $x=y$, and $\delta_{xy}=0$ otherwise). By the definition of $f_S$, for any $x \in \F_q^n$ we have
\beas
f_S(x) &=& \sum_{\beta_1,\dots,\beta_{d-1} \in \{0,1\}} (-1)^{d-1-|\beta|} \sum_{T \subseteq [n],|T| \le d} \alpha_T \prod_{i \in T} \left(x_i + \sum_{j=1}^{d-1} \beta_j (e_{S_j})_i\right) \\
&=& (-1)^{d-1} \sum_{T \subseteq [n],|T| \le d} \alpha_T \sum_{\beta_1,\dots,\beta_{d-1} \in \{0,1\}} (-1)^{|\beta|} \prod_{i \in T} \left(x_i + \sum_{j=1}^{d-1} \beta_j \delta_{S_ji}\right).
\eeas
Now note that for all $T$ such that $S \nsubseteq T$, the sum over $\beta_1,\dots,\beta_{d-1}$ will equal 0. This is because in this case there must exist an index $j \in [d-1]$ such that $S_j \notin T$, so for this $j$, $\beta_j$ does not appear in the product over $T$. So, after summing over the $\beta_i$ such that $i \neq j$, we are left with the sum $\sum_{\beta_j\in\{0,1\}} (-1)^{\beta_j} K_T$ for some constant $K_T$; this evaluates to 0 for any $K_T$. As $|S|=d-1$ and $|T| \le d$, this implies that we can rewrite $f_S(x)$ as
\beas
f_S(x) &=& (-1)^{d-1}\alpha_S \sum_{\beta_1,\dots,\beta_{d-1} \in \{0,1\}} (-1)^{|\beta|} \prod_{i \in S} \left(x_i + \sum_{j=1}^{d-1} \beta_j \delta_{S_ji}\right)\\
&+& (-1)^{d-1} \sum_{k \notin S} \alpha_{S \cup \{k\}} \sum_{\beta_1,\dots,\beta_{d-1} \in \{0,1\}} (-1)^{|\beta|} \prod_{i \in S \cup \{k\}} \left(x_i + \sum_{j=1}^{d-1} \beta_j \delta_{S_ji}\right)\\
&=& (-1)^{d-1} \alpha_S \sum_{\beta_1,\dots,\beta_{d-1} \in \{0,1\}} (-1)^{|\beta|} \prod_{i=1}^{d-1} \left(x_{S_i} + \beta_i \right)\\
&+& (-1)^{d-1} \sum_{k \notin S} \alpha_{S \cup \{k\}} \sum_{\beta_1,\dots,\beta_{d-1} \in \{0,1\}} (-1)^{|\beta|} x_k \prod_{i=1}^{d-1} \left(x_{S_i} + \beta_i \right)\\
&=& (-1)^{d-1} \left(\prod_{i=1}^{d-1} \left( \sum_{\beta_i\in \{0,1\}} (-1)^{\beta_i} (x_{S_i} + \beta_i) \right) \right) \left( \alpha_S + \sum_{k \notin S} \alpha_{S \cup \{k\}}x_k \right)\\
&=& \alpha_S + \sum_{k \notin S} \alpha_{S \cup \{k\}}x_k
\eeas
as claimed.
\end{proof}



\end{document}